\newenvironment{proof}{ \textbf{Proof:} }{ \hfill $\Box$}
\newtheorem{thm}{Theorem}[section]
\newtheorem{corollary}[thm]{Corollary}
\newtheorem{prop}[thm]{Proposition}
\newtheorem{rem}[thm]{Remark}
\def\bb0{{\mathbb{0}}}
\def\b1{{\mathbf{1}}}
\def\bb{{\mathbf{b}}}
\def\b0{{\mathbf{0}}}
\def\bbE{{\mathbb{E}}}
\def\bbP{{\mathbb{P}}}
\def\bbR{{\mathbb{R}}}
\def\sfF{\mathsf{F}}
\def\sf0{{\mathsf{0}}}
\def\1{{\bf 1}}
\def\ze{\zeta}
\def\ga{\gamma}
\newcommand{\sg}{\sigma}
\def\th{\theta}
\def\sfF{\mathsf{F}}
\def\f{\frac}
\def\tw{\tilde{w}}
\def\tm{\tilde{\mu}}
\begin{document}

%%%%%%%%%%%%%%%%%%%%%%%%%%%%%%%%TITLE%%%%%%%%%%%%%%%%%%%%%%%%%%%%%%%%%%%%%%%%
\begin{titlepage}
\begin{center}
{\bf Autoregressive Cascades on Random Networks} \\
\vspace{0.2in} {Srikanth K. Iyer\footnote{skiyer@math.iisc.ernet.in, research supported in part by UGC center for advanced studies.}}\\
Department of Mathematics,
Indian Institute of Science, Bangalore, India. \\
\vspace{0.1in}
Rahul Vaze \footnote{vaze@tcs.tifr.res.in, research supported in part by INSA young scientist award grant.}\\
School of Technology and Computer Science, Tata Institute of Fundamental Research,
Homi Bhabha Road, Mumbai, India.\\
\vspace{0.1in}
Dheeraj Narsimha \footnote{dheeraj.narasimha@gmail.com.} \\
Birla Institute of Technology and Science (Goa)\\
Goa, India.
\end{center}
\vspace{0.05in}
%
%\begin{center} {\bf Preprint} \end{center}
%\sloppy
%

\begin{center} {\bf Abstract} \end{center}

%\begin{center} \parbox{4.8in}

{This paper considers a model for cascades on random networks in
which the cascade propagation at any node depends on the load at
the failed neighbor, the degree of the neighbor as well as the load at
that node. Each node in the network bears an initial load that is
below the capacity of the node. The trigger for the cascade
emanates at a single node or a small fraction of the nodes from
some external shock. Upon failure, the load at the failed node
gets divided randomly and added to the existing load at those
neighboring nodes that have not yet failed. Subsequently, a
neighboring node fails if its accumulated load exceeds its
capacity. The failed node then plays no further part in the
process. The cascade process stops as soon as the accumulated load at
all nodes that have not yet failed is below their respective capacities.
The model is shown to operate in two regimes, one in which the
cascade terminates with only a finite number of node failures. In
the other regime there is a positive probability that the cascade
continues indefinitely. Bounds are
obtained on the critical parameter where the phase transition occurs.} \\

%Consider an infinite locally finite random graph. Each node $u$ of
%the graph has an independent load $E_u$ distributed as $F$.
% whose support is contained in $[0,c)$, $c > 0.$
%At time $t=0$, suppose that the load at node $r$
%%of the graph receives an external shock resulting in its load increasing to
%is $x$ such that $x \geq C_r,$ where $C_r$ is the capacity of node
%$r$. Node $r$ fails and transfers a fraction $p_i$ of its load to
%its $i^{\mbox{th}}$ non-failed neighbor, $i=1, 2, \ldots ,B_r$ and
%$\sum_{i=1}^{B_r}p_i = 1$ at time $t=1$. Carrying this process
%forward, if the load at any node equals or exceeds its (random)
%capacity at time $t=n$, then its load is randomly partitioned and
%allocated among the neighbors of this node at time $t=n+1$ that
%have not already failed. We first derive sufficient conditions for
%the cascade to survive indefinitely with positive probability on
%infinite random graphs with a given degree distribution. Secondly,
%we derive sufficient conditions for the existence, with positive
%probability, of a giant component of
%failed nodes in graphs with a given degree distribution, asymptotically in the size of the graph.} \\
%
%\vspace{0.1in}
\begin{center}
\today
\end{center}
%
%\rule[1mm]{4in}{.4mm} \\
%\vspace{0.05in}
%{\sl AMS 1991 subject classifications}: \\
%\hspace*{0.5in} Primary:   60D05, 60G70;
%\hspace*{0.5in} Secondary:  05C05, 90C27 \\
{\sl Keywords:} Random graphs, trees, cascade, stability, networks.
\end{titlepage}

%\doublespacing
\section{Introduction}
\label{sec:intro}  A model for the evolution of cascades in random
networks was introduced in \cite{watts2002simple}, to study the
spread of ideas, opinion, technology etc. In this model, the nodes
of a network are regarded as agents and the interactions between
agents are modeled as links in the network.  An agent initially in
state 0 will adopt a new idea (state 1) as soon as a fraction of
its neighbors who have adopted the new idea exceeds a threshold.
The existence of a phase transition (as a function of the
threshold) was demonstrated and a condition for the existence of
global cascades was derived. This model was generalized in
\cite{gleeson2013systemic}, where in addition to the above, the
links are endowed with random weights, whose distribution may
depend on the degree of the two nodes that the link connects. The
motivation for this model comes from the study of systemic risks
in financial networks. A node adopts a new idea provided the sum
of the weights of links to neighboring nodes that have adopted the
idea exceeds a threshold that depends on the degree at that node.
A further generalization is the threshold model
\cite{kempe2003maximizing, venkatramanan2011information}, where an
agent adopts the new idea if the number of its neighbors or the
combined edge weights of the neighbors that have adopted the new
idea is larger than a random threshold.
%The transferred load, however, is fixed and not random and
%proportional to the actual load as in the ARC model.

The model that we study in this paper is motivated by networks
such as the electrical networks or other networks where the nodes
can be thought of as service providers and load sharing occurs via
the network structure. Consider a large random network, where each
node has degree that is distributed according to a specified
distribution $\{w_k, k = 0,1, \ldots\}$ (see
\cite{newman2001random}). Such a network is locally ``tree-like''
when the network size is large, that is, there are few short
cycles. Melnik {\it et. al.} \cite{melnik2011unreasonable} examine
the effectiveness of tree-like networks to study networks with
clustering.
%However, in most cases the reason for the use of
%such models is purely due to their analytical tractability.

Initially, node $u$ bears a load $L^{(0)}_u$, and has capacity $c
> 0$, where the random variables $L^{(0)}_u$ are independent and
whose distributions are the same as those of the non-negative
random variable $L$. To begin with, let us assume that the support
of $L$ is contained in $[0,c)$ (abbreviated as $L < c$). This
restriction ensures that initially all nodes are stable or active.
At time $t=0$, an external event or a surge occurs at a node that
we label $r$, resulting in the load $L_r$ at that node to exceed
$c$. This causes node $r$ to fail. If the number of neighbors of
$r$, $N(r)=0$, then the cascade stops. If $N(r)=k > 0$, then
suppose that $u_1, \ldots, u_{k}$ are neighbors of $r$ with loads
$L^{(0)}_{u_1}, \ldots L^{(0)}_{u_{k}}$, respectively at time
$t=0$. Then at time $t=1$, a random fraction of the total load of
the failed node $r$ is pushed over to each of the active
neighbors, that is, the load at $u_i$ increases to
\[ L^{(1)}_{u_i} = p_{ru_i} L_r + L^{(0)}_{u_i}, \]
where $\{p_{ru_i} \geq 0, i=1,\ldots, k \}, \sum_{i=1}^{k}
p_{ru_i} =1,$ is random probability mass function.
%, is a random exchangeable probability mass
%function. That is, for any permutation $\sg = (\sg(1), \sg(2),
%\ldots \sg(N(r)))$, we have
%
%\[(p^{(r)}_1, p^{(r)}_2, \ldots , p^{(r)}_{N(r)}) \stackrel{d}{=}
%(p^{(r)}_{\sg(1)}, p^{(r)}_{\sg(2)}, \ldots ,
%p^{(r)}_{\sg(N(r))}).\]
%
%Thus, a random fraction of the total load of the failed node is
%pushed over to each of the active neighbors.
Consequently, a node whose new load exceeds its capacity, fails.
Each failed node distributes its entire load among all its active
(non-failed) neighbors randomly as described above. The cascade of
failed nodes terminates if at some time, the resulting loads at
all nodes that are neighbors of nodes that failed in the previous
time step do not exceed their respective capacities. We will refer
to this model as the autoregressive cascade (ARC) model or
process, since the increase in load at a node as a result of
failure of a neighboring node resembles an autoregressive process.
We show the existence of a phase transition, that is, existence of
regions where the cascade dies out in finite time with probability
one, and regions where the process survives indefinitely with
positive probability for infinite networks (or has a giant
component of failed nodes for large finite networks).
%
%Our interest is in deriving conditions under which node failures
%(cascade) can survive indefinitely with positive probability or
%the cascade terminates in finite time almost surely. This shows
%the existence of a phase transition

ARC model is well-suited for studying the behavior of outages in
electrical power networks, where typically a single node failure
can lead to catastrophically wide-spread outages
\cite{Santhi2010}.  The ARC model is linked to the node based
failure model in electrical networks, where nodes fail when the
demand (load) overshoots the supply (capacity), and demand at the
failed node is transferred to its active neighbors. In
\cite{Santhi2010}, simulation analysis is presented for a fully
connected network with exponentially distributed loads. Simpler
node failure models have been studied in
\cite{dobson2004branching, Dobson2005}. In particular, in
\cite{Dobson2005}, with each node failure, the load at every other
active node is uniformly increased by a constant load, while in
\cite{dobson2004branching} each failing node results in the
failure of a random number of nodes sampled at random from a given
distribution.
%$k$ nodes, where $k$ is a random variable distributed
%according to a given probability mass function.
Both these models ignore the topology of the graph.

There is also a line failure model used in electrical networks
\cite{farina2008probabilistic, xiao2011cascading}, where
transmission lines fail when either the current or voltage exceeds
line's rating, and failure of one transmission line changes the
current/voltages change on other lines and failures propagate
accordingly. ARC model does not capture the line failure model.

Note that the ARC model described above is different from the
usual epidemic models (see for example \cite{watts2002simple}),
where nodes fail or are infected based on some probabilistic or
deterministic mechanism that depends only on the number of
failed/infected neighbors but not the severity of the infection.
These type of models are reviewed in \cite{newman2010networks,
lelarge2012diffusion}  and references therein. In contrast, in the
ARC model, failure is governed by a mechanism of load transfer
whose effect can persist over several generations.

The closest interaction model to the ARC model is the sandpile
model \cite{bak1988self,majumdar1992equivalence,dhar1990abelian},
where at each time slot, a particle is added at a randomly chosen
node. Each node has a fixed capacity, and if the number of
particles at any node exceeds its capacity, the node is said to
topple and all the particles at that node are transferred equally
to all its neighbors. Thus, the newly added particles at a node
due to the toppling of a neighboring node is a fixed deterministic
quantity, and given that a node has toppled, the future toppling
of its neighbors are independent. With the ARC model, however, the
transferred load is a random quantity, that correlates subsequent
node failures. In case of the sandpile model, a toppled node is
allowed to participate in further interactions, while in ARC model
once a node fails it takes no further part in the process. There
is also an inhibition sandpile model \cite{manna1997sandpile},
where a toppled node is stopped temporarily or permanently  from
taking part in further interaction, however, only empirical
results are available for the same. Another similar form of interaction is
the bootstrap percolation, where the failure of a node depends on
the failure of a fixed number of neighbors and not the weights at
the failed neighboring nodes \cite{Lebowitz1989}.
%{\bf reinforce
%the difference; this last sentence is not clear}.

In this paper, we derive sufficient conditions for the cascade in
the ARC process to survive indefinitely with positive probability
or terminate with probability one in finite number of steps. We
couple the ARC process with a suitable Galton-Watson branching
process that lower bounds the growth of failed nodes in the ARC
process. The condition for survival then follows from the
condition for super-criticality of the coupled Galton-Watson
process. For the converse result, we couple the ARC process with a
suitable branching random walk (BRW) where particles are killed
upon breaching a fixed barrier. The bound then follows from a
result in \cite{Biggins1977}  on the finite time termination of
the BRW.

%For the cascade to survive indefinitely, we define a suitable
%Galton-Watson process that lower bounds the growth of failed nodes
%in the ARC process. Then from conditions for Galton-Watson process
%to survive indefinitely we get the required result. For the
%converse result, we define a suitable branching random walk (BRW)
%with a fixed barrier that upper bounds the growth of the ARC
%process, and then using results from \cite{} Biggins {\it et. al.}
%(1991) on the finite termination of BRW we get the required
%conditions.

\section{ARC Process}
\label{sec:trees}
%Ideally, we would like to study the cascade
%with the ARC model on general graphs, however, that is too
%challenging to being with. Instead,
%
We first study the evolution of the ARC process on an infinite
graph $G$ with specified degree distribution $\{w_k, k \geq 0\}$.
%asymptotically in the size of the graph.
Although random graphs are only an abstraction, they have been
widely used as first approximations \cite{watts2002simple}. For a
random graph on $n$ vertices with specified degree distribution,
the graph is locally ``tree-like'' in that there are no short
cycles for large $n$. This becomes exact for the infinite graph.
%We first consider the ARC process on an infinite graph with degree distribution $w_k$.
%We first study the evolution on an infinite locally finite
%tree with a single node where the cascade begins.

%Suppose the network ($G$) on which the ARC process evolves is a
%locally finite infinite random tree with a special node called the
%{\it root} which we denote by $r$.
%In case the network is the graph $G_n$ with $n$ nodes and degree
%distribution $\{w_k\}$.
%Each node $u \in G$ has degree $N(u)$ that
%is distributed as an independent copy of a non-negative
%integer-valued random variable $N + 1$. Without loss of
%generality, we assume that $P[N=0] = 0$, and let $\mu = \bbE[N]
%> 1$.
If we start from a single failed node, then the number of
neighbors of this node has distribution $\{w_k\}$. As we go forth,
exploring the component starting from a single vertex or the
progress of the cascade, we are interested in the unexplored nodes
or nodes that have not yet failed. The first and subsequent
generations will then have a size-biased degree distribution (see
\cite{durrett2007random}, Chapter 3, pp71). A first generation
vertex with degree $k$ is $k$ times as likely to be chosen as the
one with degree $1$. So the distribution of degree minus one of
nodes in the first and subsequent generations is given by
\begin{equation}
\tw_{k-1} = \f{k w_k}{\mu}, \qquad k \geq 1, \label{e0}
\end{equation}
where
\begin{equation}
\mu = \sum_{k=1}^{\infty} k w_k. \label{e0a}
\end{equation}
In (\ref{e0}), the subscript $k-1$ is used since we have used one
edge in linking to that vertex.  It is this size-biased
distribution $\tw_k$ given by (\ref{e0}) that is relevant as far
as the long term behavior of the process is concerned.

Since we are interested in the survival or extinction of the ARC
process, we require the underlying graph $G$ to have an infinite
connected component. Thus, without loss of generality we will
assume that $\tm = \sum_{k} k \tw_k > 1$ (see Theorem 3.1.3,
\cite{durrett2007random}) for which there is a positive
probability that the component of $G$ containing any node is
infinite. This condition also ensures that for a finite graph,
there is a giant component containing a large fraction of the
nodes with probability approaching one asymptotically in the size
of the graph.

Each node $u \in G$ carries an initial load $L_u^{(0)}$ and has a fixed
capacity $c$. The random variables $L_u^{(0)}$ are assumed to be
independent and identically distributed. We denote by $L$ a
generic random variable having the same distribution as the
initial loads. Let $\sfF$ be the distribution function of $L$.
Suppose that $\sfF$ is supported on $(0,c).$ Thus, to begin with,
all nodes have loads less than their capacity. This condition is
realistic for power networks where node failures are rare.

At time $t=0$, an external event happens resulting in the failure
of the root node $r$, making $L_r^{(0)} > c$. If $r$ is isolated,
then the cascade stops. Else, the load $L_r^{(0)}$ at $r$ is then
transferred to its neighbors as described below. Given the degree
$N(r)=k$, $k > 0$, of the root node $r$, let $u_1, u_2, \ldots
,u_k$ be the neighboring nodes of the root. Let $\{p_{ru_i}, i=1,
\ldots ,k \}$ be an exchangeable set of non-negative random
variables satisfying $\sum_{i=1}^{k} p_{ru_i} = 1.$ That is, for
any permutation $\sg = (\sg(1), \sg(2), \ldots \sg(k))$ of $\{1,2,
\ldots ,k\}$, we have
\[(p_{ru_1}, p_{ru_2}, \ldots , p_{ru_k}) \stackrel{d}{=}
(p_{ru_{\sg(1)}}, p_{ru_{\sg(2)}}, \ldots , p_{ru_{\sg(k)}}).\]
%
%
%\[ \sum_{i=1}^{N(r)} p^{(r)}_i = 1. \]
%
In particular, the distribution of $p_{ru_i}$ is independent of
$i$, and hence we let $p=p(k)$ to denote the random variable
satisfying
\begin{equation}\label{eq:distp}
p(k) \stackrel{d}{=} p_{ru_i}(k),
\end{equation}
where $\stackrel{d}{=}$ denotes equality in distribution. We will
often write $p_{uv}$ for $p_{uv}(k)$. For each of the $N(r)=k$
nodes, $u_1, \ldots , u_{k}$, attached to the root, the load
$L_{u_i}^{(1)}$ at time $t=1$ becomes
\[ L_{u_i}^{(1)} = p_{ru_i} L_r^{(0)} + L_{u_i}^{(0)}. \]
If $L_{u_i}^{(1)} < c, \ \forall \ i= 1,2, \ldots ,k$, then the
cascade of node failures stops at time $t=1$. Otherwise, node
$u_i$ fails at time $1$ if the load $L_{u_i}^{(1)} \geq c$. If $r$
is the only node to which $u_i$ is connected, then the cascade
goes no further along this path. Else, each neighbor $v_j$ of
$u_i$ other than the root $r$ receives an additional load
$p_{u_iv_j} L_{u_i}^{(1)}$, where conditional on the degree
$N(u_i)+1$ of $u_i$, the collection $\{p_{u_iv_j}, j=1, \ldots
,N({u_i})\}$ forms an independent exchangeable random probability
mass function identical in distribution to the one above. The
probability that one of the neighbors of $u_j$ is also neighbor of
$r$ is much smaller than the probabilities of interest and can be
ignored due to the locally tree-like nature of the random graph as
is the standard practice in such analysis.

The process is said to terminate at time $T+1$ if some node at a
distance $T$ from the root has failed and none of the nodes at
distance $T+1$ fail at time $T+1$. Else the cascade continues. Our
first result is a sufficient condition for the cascade to survive
forever with positive probability. For two reals, $a,b$ let $a
\vee b$ denote their maximum.
\section{Super-Critical Regime}

\begin{thm} Consider the ARC process on the infinite random graph $G$
as described above.
%Let $N$ be a random variable with size-biased
%degree distribution $\{ \tw_k, k \geq 0 \}$.
Then if
\begin{equation}
%\bbE\left[ N \left( 1 - \bbE\left[ \sfF \left( c(1-p) \right) | N
%\right] \right) \right] > 1,
\sum_{k=1}^{\infty} \tw_k k \bbE\left[
\bar{\sfF}(c(1-p(k)))\right]
> 1, \label{e1}
\end{equation}
then the cascade survives indefinitely with positive
probability,
%
%where the random variable inside the expectation is taken to be
%zero if $N=0$, and for $N > 0$,
where $\bar{\sfF} = 1 - \sfF$ and $p(k)$ is as defined in
(\ref{eq:distp}).
\label{t1}
\end{thm}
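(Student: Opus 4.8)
The plan is to construct a Galton--Watson branching process that the ARC process dominates, and then invoke the standard criterion that a Galton--Watson process with mean offspring greater than one survives with positive probability. The key idea is that the locally tree-like structure lets us treat the cascade as spreading along a tree, and for survival purposes it suffices to track a single attribute at each failed node — namely its load at failure — and to decide, at each node, how many of its children fail as a consequence.

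First I would set up the coupling precisely. A failed node $u$ at some generation carries a load $L_u \geq c$ at the moment it fails. By the tree-like assumption, $u$ has a size-biased number of forward neighbors $k$ distributed (after subtracting the used edge) according to $\tw_{k-1}$ from (\ref{e0}). Each such child $v$ receives $p_{uv} L_u$ added to its own initial load $L_v^{(0)}$, and $v$ fails iff $L_v^{(0)} + p_{uv} L_u \geq c$, i.e. iff $L_v^{(0)} \geq c - p_{uv} L_u$. Since $L_u \geq c$, we have $c - p_{uv} L_u \leq c - p_{uv} c = c(1 - p_{uv})$, so the event $\{L_v^{(0)} \geq c(1 - p_{uv})\}$ implies the failure of $v$. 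This is the crucial monotonicity step: by lower-bounding the load $L_u$ by $c$ we underestimate the push onto each child, and hence obtain a branching process whose offspring numbers are stochastically dominated by the true number of failures. Conditioned on the degree and on the weights $p_{uv}$, the children fail independently because their initial loads $L_v^{(0)}$ are i.i.d., so the number of \emph{guaranteed} failures among the $k$ children is a sum of conditionally independent indicators.

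Next I would compute the mean offspring of this dominated branching process. Conditioning first on the degree parameter $k$ (so that each weight has the law of $p(k)$ from (\ref{eq:distp})) and then on the weights, the expected number of guaranteed child-failures at a node of size-biased degree $k$ is $k\,\bbE[\bar{\sfF}(c(1-p(k)))]$, since $\bbP(L_v^{(0)} \geq c(1-p_{uv})) = \bar{\sfF}(c(1-p_{uv}))$ and there are $k$ independent neighbors with exchangeable (hence identically distributed) weights. Averaging over the size-biased degree distribution $\tw_k$ gives the overall mean offspring
\begin{equation}
m = \sum_{k=1}^{\infty} \tw_k\, k\, \bbE\!\left[\bar{\sfF}(c(1-p(k)))\right], \nn
\end{equation}
which is exactly the left-hand side of (\ref{e1}). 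Hypothesis (\ref{e1}) then reads $m > 1$, so the dominating Galton--Watson process is supercritical and survives with positive probability; since it is dominated by the set of failed nodes in the ARC process, the cascade also survives indefinitely with positive probability.

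The main obstacle I anticipate is making the coupling rigorous rather than merely heuristic. Two points need care. First, the independence across children used above is only exact on a tree; on the random graph one must justify that the short-cycle corrections are negligible, which the excerpt already sanctions by the locally tree-like assumption, so I would state this reduction explicitly and work on the limiting tree. Second, and more delicate, is the dependence \emph{across generations}: in the true ARC process the load transmitted to generation $t+1$ depends on the actual (larger) loads at generation $t$, whereas the branching process I build uses only the failure indicators and the fixed threshold $c$. I must verify that replacing $L_u$ by the lower bound $c$ at every node simultaneously yields a genuine stochastic domination of the whole failure tree, not just a generation-by-generation inequality — this is where a careful inductive coupling argument, matching each node of the branching process to a distinct failed node in the ARC process and checking that guaranteed failures persist because the real loads only exceed the surrogate value $c$, will be needed.
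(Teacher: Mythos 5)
Your proposal is correct and follows essentially the same route as the paper's proof: both couple the ARC process from below with a Galton--Watson process obtained by replacing the load at each failed node by the capacity $c$, so that a child comes alive when $L_v^{(0)} \geq c(1-p_{uv})$, which yields i.i.d.\ offspring across generations with mean exactly the left-hand side of (\ref{e1}), and supercriticality then gives survival with positive probability. The coupling subtleties you flag at the end are precisely what this substitution of $c$ for the actual load resolves, as the paper also observes.
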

\begin{proof}
The idea of the proof is to couple the ARC process with a
super-critical Galton-Watson (GW) process as follows. Instead of
transferring the actual load at a failed node to its
neighbors/children, we will transfer an amount equal to the
capacity $c$ which is less than the load at the failed node. To be
specific, the GW process starts at time $0$ with a single
individual at the root. The $N(r)$ nodes $u_1, \ldots , u_r$
connected to the root are the possible children of the root in
generation $1$. If $N(r) = 0$, then the process terminates. If
$N(r) > 0$, then conditioned on $N(r)=k$, the random allocation
probabilities $\{p_{ru_i}, i=1, \ldots ,k\}$,
%the capacities at the root and its children $\{C_r,C_{u_i}, i=1, \ldots, N_r\}$
and the loads $\{L_{u_i}^{(0)}, i=1, \ldots, k\}$, the node $u_i$
comes alive at time $1$ in the GW process if the following
condition is satisfied,
\[ p_{ru_i} c + L_{u_i}^{(0)} \geq c. \]

Note that in comparison, the node $u_i$ fails in the ARC process if
\[ L_{u_i}^{(1)} = p_{ru_i} L_r^{(0)} + L_{u_i}^{(0)} \geq c. \]
This procedure is repeated at each level to construct the GW
process. In constructing the GW process, one can think of the net
load $L_u$ of the parent that fails being replaced by its capacity
$c$. Since at the time of failure, the load at node $u$,
$L_u^{(1)} > c$, if a node comes alive in the GW process then it
fails in the ARC process as well. Using the capacity and not the
actual load of the failed node to determine the number of children
in the GW process, not only gives a lower bound on the number of
failed nodes, it also ensures that number of offspring are
independent and identically distributed across generations $n \geq
1$ in the GW process.

Now we need to derive the condition for survival or
super-criticality of this GW process. For any node $u$ that fails
in any generation larger than one, let $N(u)+1$ be the degree of
that node. $N(u)$ has the size-biased distribution $\{\tw_k, k
\geq 0\}$. If $N(u) > 0$, then conditional on $N(u)=k$, let $v_1,
\ldots , v_{k}$, be the neighbors of $u$ in the next generation
and let $p_{uv_1}, \ldots , p_{uv_k}$ be the random allocation.
The probability that $v_k$ comes alive in the GW process is given
by
\begin{eqnarray}
q_k & = & P[p_{uv_i} c + L_{v_i}^{(0)} \geq c | N(u)=k] \nonumber \\
 & = & \bbE\left[\bar{\sfF}( c(1-p(k))\right] ,
\label{pe1}
\end{eqnarray}
where $\bar{\sfF} = 1 - \sfF$
% is the distribution of the load $L$,
and $p(k)$ is as defined in (\ref{eq:distp}).
%Thus the conditional distribution of the number of offspring of $u$ in the GW process
%will follow a binomial distribution with parameters $N(u)$ and
Thus, the expected number of offspring of $u$ in the GW process
given $N(u) = k$ equals $kq_k$. Since $N(u)$ has distribution $\{
\tw_k \}$,  the GW process is super-critical as long as the mean
number of offspring is greater than $1$ which is true if condition
(\ref{e1}) holds. If the GW process is super-critical, then the
(XXX which) process survives indefinitely with positive probability (see
Athreya and Ney, 1972, pp. 7). As noted above, if the GW process
survives, then so does the ARC process.
\end{proof}

%Note that the degree distribution of the nodes enters via the
%distribution of $p$.
The following corollary is immediate.
\begin{corollary}
Suppose that the load allocation is uniform, that is, conditional
on the degree at a failed node (other than the root) being $N+1,$
each of the $N$ non-failed neighbor receives an equal fraction
$p=\f{1}{N}$ of the load of its failed parent, provided $N > 0$.
Then the condition (\ref{e1}) for the ARC process to survive
indefinitely with positive probability reduces to
\begin{equation}\label{eq:cor2}
 \sum_{k=1}^{\infty} \tw_k k \bar{\sfF}\left( \f{c(k-1)}{k}
\right) > 1. 
\end{equation}
%\[ \bbE\left[ N \left( 1 - \sfF \left( \f{c(N - 1)}{N} \right) \right) \right] > 1. \]
%
If in addition, the degree of each node is fixed, that is  $N
\equiv d > 1$ is a constant, then  \eqref{eq:cor2} reduces to
\begin{equation}\label{eq:cor21}
d \left( 1 - \sfF \left( \f{c(d - 1)}{d} \right) \right) > 1.
\end{equation}
\end{corollary}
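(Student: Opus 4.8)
The plan is to specialize the general survival condition (\ref{e1}) from Theorem~\ref{t1} to the stated sampling scheme, and then further to the constant-degree case. The corollary is a direct substitution argument, so the work is almost entirely computational rather than conceptual; the only thing to verify carefully is that the uniform allocation is consistent with the exchangeability and distributional setup underlying $p(k)$.

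First I would handle the uniform allocation. Under the hypothesis, a failed non-root node of degree $N+1$ distributes its load equally among its $N$ active neighbors, so conditional on $N=k$ the random variable $p(k)$ defined in (\ref{eq:distp}) is deterministic and equal to $\f{1}{k}$. (This trivially satisfies exchangeability and the normalization $\sum_{i=1}^k p_{uv_i}=1$.) Substituting $p(k)=\f{1}{k}$ into the expectation appearing in (\ref{e1}) collapses the expectation, since the integrand is now constant:
\begin{equation}
\bbE\left[\bar{\sfF}\bigl(c(1-p(k))\bigr)\right] = \bar{\sfF}\!\left(c\Bigl(1-\f{1}{k}\Bigr)\right) = \bar{\sfF}\!\left(\f{c(k-1)}{k}\right). \nonumber
\end{equation}
Feeding this back into the sum in (\ref{e1}) immediately yields the criterion (\ref{eq:cor2}). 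I would note in passing that the $k=1$ term contributes $\tw_1 \cdot 1 \cdot \bar{\sfF}(0)$, which is consistent with the convention that a degree-one non-root node has no further children to which it can pass load, so it does not generate new failures and its argument $\f{c(k-1)}{k}=0$ reflects this boundary behavior.

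Next I would treat the constant-degree reduction. If every node has fixed degree $d>1$, then each non-root failed node has exactly $N\equiv d-1$ active neighbors to which it passes load (one edge having been used to reach it). Hence the size-biased offspring-count distribution $\{\tw_k\}$ is a point mass at $k=d-1$, i.e. $\tw_{d-1}=1$ and $\tw_k=0$ otherwise. The sum in (\ref{eq:cor2}) then has a single surviving term at $k=d-1$; but I must be careful about the indexing, since $\tw_{k-1}=\f{k w_k}{\mu}$ in (\ref{e0}) means $\tw_k$ is the distribution of \emph{degree minus one}. With a constant degree $d$, the relevant index in (\ref{eq:cor2}) evaluates the argument at the full degree, giving $d\bigl(1-\sfF(\f{c(d-1)}{d})\bigr)$, which is exactly (\ref{eq:cor21}) after writing $\bar{\sfF}=1-\sfF$.

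The main obstacle — and it is a bookkeeping obstacle rather than a mathematical one — is keeping the degree/offspring index bookkeeping straight: the paper uses $N+1$ for the degree of a failed node, $N$ for the number of forward neighbors, and the size-biased variable $\tw_{k-1}$ indexed by $k$ (the full degree) in (\ref{e0}) but summed as $\tw_k$ (with $k$ the number of children) in (\ref{e1}). I would verify once and for all, by tracking a single failed node of full degree $d$, that the argument of $\bar{\sfF}$ in the constant-degree case is $\f{c(d-1)}{d}$ and the prefactor is $d$ rather than $d-1$; the prefactor $d$ arises because the criterion multiplies the per-neighbor survival probability by the full degree in the form inherited from the size-biasing. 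Once this indexing is pinned down, both displayed identities follow by inspection with no estimation required.
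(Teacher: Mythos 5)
Your first step is correct and is exactly what the paper intends (the paper offers no written proof at all, declaring the corollary ``immediate''): with uniform allocation, $p(k)\equiv 1/k$ is deterministic, the expectation in \eqref{e1} collapses to $\bar{\sfF}\left(\f{c(k-1)}{k}\right)$, and \eqref{eq:cor2} follows by substitution.

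The constant-degree step, however, contains a genuine error. You read ``$N\equiv d$'' as ``full degree $=d$,'' placed the point mass of $\{\tw_k\}$ at $k=d-1$, and then asserted that the surviving term of \eqref{eq:cor2} is nevertheless $d\,\bar{\sfF}\left(\f{c(d-1)}{d}\right)$ because the criterion supposedly multiplies the per-neighbor survival probability by the \emph{full} degree ``inherited from the size-biasing.'' That assertion is false and contradicts your own substitution: in \eqref{eq:cor2} the index of $\tw_k$, the prefactor $k$, and the argument $\f{c(k-1)}{k}=c\left(1-\f{1}{k}\right)$ all refer to the \emph{same} quantity, namely the number of non-failed (forward) neighbors, since the fraction $1/k$ is the share each of those $k$ neighbors receives. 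Under your reading, the single surviving term would be $(d-1)\,\bar{\sfF}\left(\f{c(d-2)}{d-1}\right)>1$, which is a different inequality from \eqref{eq:cor21}. The correct resolution of the bookkeeping is the opposite of the one you chose: in the corollary, $N$ is (as in its first sentence) the number of non-failed neighbors, the degree being $N+1$; so ``$N\equiv d$'' means $\tw_d=1$, the sum in \eqref{eq:cor2} reduces to its $k=d$ term, and \eqref{eq:cor21} follows at once. (The paper's simulation section confirms this reading explicitly: there $d$ is described as ``the number of child nodes.'') Your side remark about the $k=1$ term shows the same confusion: $\tw_1$ is the weight of a node with one forward neighbor (degree two), not of a degree-one node; childless nodes correspond to $k=0$ and do not appear in the sum. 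So although both displayed inequalities are correct as stated, your derivation of the second one does not establish it --- it reaches the right formula only because you imposed it at the end.
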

%
%The proof of theorem~\ref{t1} follows by coupling from below the
%ARC process with a super-critical two-phase Galton-Watson branching
%process. In this process, the offspring distribution of the parent
%is different from those of its children and the subsequent
%generations.
The next two results are straightforward consequences of the
coupling from below of the ARC process with a super-critical  GW
process as described in proof of Theorem ~\ref{t1}. In order to
state the results we need some notation.

Suppose that the load allocation is uniform, that is, $p(k) \equiv
1/k$. Given the random variable $N$ having the size-biased degree
distribution $\{ \tw_k, k \geq 0 \}$, let $M=0$ if $N=0$ and given
$N=j > 0$, let $M$ be a binomial random variable with parameters
$j$ and $q_j$, where
\begin{equation}\label{e1a}
q_j =  \bar{\sfF}\left(  \f{c(j-1)}{j} \right).
\end{equation}
%
%where $p$ is as defined in (\ref{eq:distp}).
Note that the $q_j$ defined in (\ref{pe1}) reduces to the one
defined in (\ref{e1a}) when the load allocation is uniform. Define
the probability mass function
\[ m_k = P[M = k] = \sum_{j=k}^{\infty} \binom{j}{k} q_j^k (1-q_j)^{j-k}
\tw_j , \]
for $k=0,1, \ldots$.
Let $\Phi(s) = \sum_{k=1}^{\infty} s^k m_k$, be the probability
generating function of $\{m_k, k \geq 0\},$ and let $\eta$ be the
smallest non-negative root of the equation $\Phi(s) = s$. With
uniform allocation, the offspring distribution of the coupled
branching process for the first and subsequent generations is same
as that of $M$. Note that the left hand side expression in
(\ref{eq:cor2}) is $\bbE[M]$. And $\ga = \bbE[M] > 1$ is the condition
for the branching process to be super-critical. Let $W_n$ be the
number of nodes that fail in the ARC process at time $n$, that is, these are nodes at
distance $n$ from the root and fail.
%The following result follows from Theorem 2.1.5,
%\cite{durrett2007random}.
%
\begin{prop} Under condition (\ref{eq:cor2}), the ARC process survives
indefinitely with probability exceeding $1-\eta$. If in addition
we have $\sum_k k \log k \; m_k < \infty,$ then on a set of
probability at least $1 - \eta$, we have
\[ \liminf_{n \to \infty} \f{W_n}{\ga^n} > 0.\]
\label{p1}
\end{prop}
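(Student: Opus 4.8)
The plan is to exploit the coupling constructed in the proof of Theorem~\ref{t1}. Let $Z_n$ denote the size of the $n$-th generation of the coupled Galton--Watson process, which under uniform allocation has offspring law $\{m_k\}$ (that of $M$) with mean $\ga = \bbE[M] > 1$. The coupling is engineered so that the Galton--Watson tree is a subtree of the ARC failure tree: by induction on $n$, if a node $u$ is alive at generation $n$ in the GW process then it has failed in the ARC process, and any GW-child $v$ of $u$ satisfying $p_{uv} c + L_v^{(0)} \geq c$ also fails in the ARC process, because $v$ there receives at least $p_{uv} c$ of load from the failed $u$. Hence pathwise $W_n \geq Z_n$ for every $n$. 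For the first assertion, since $\ga > 1$ the GW process is supercritical and survives with probability $1-\eta$, where $\eta$ is the smallest nonnegative root of $\Phi(s)=s$ (Athreya and Ney, 1972, pp.~7); on the event that it survives the ARC process survives as well, so the ARC process survives with probability at least $1-\eta$ (and in fact strictly more, since the ARC process may persist even when the coupled GW process dies out).

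For the growth statement, first I would set $\cF_n = \sg(Z_0,\dots,Z_n)$ and observe that $\bbE[Z_{n+1}\mid \cF_n] = \ga Z_n$, so that
\[ \cW_n \bydef \f{Z_n}{\ga^n} \]
is a nonnegative $\cF_n$-martingale. By the martingale convergence theorem $\cW_n \to \cW$ almost surely for some $\cW \geq 0$. The crux is to show that $\cW$ is strictly positive precisely on the survival event, and this is exactly where the hypothesis $\sum_k k\log k\, m_k < \infty$ enters: it is the $L\log L$ (Kesten--Stigum) condition for the offspring law $\{m_k\}$.

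Invoking the Kesten--Stigum theorem (Athreya and Ney, 1972, Chapter~I) then gives $\bbE[\cW]=1$ and, almost surely, that $\{\cW>0\}$ coincides with the nonextinction event of the GW process. Consequently, on the survival set---which has probability $1-\eta$---we have $\cW > 0$, and therefore
\[ \liminf_{n\to\infty} \f{W_n}{\ga^n} \;\geq\; \lim_{n\to\infty} \f{Z_n}{\ga^n} \;=\; \cW \;>\; 0, \]
where the first inequality is the pathwise domination $W_n \geq Z_n$ established above. This is the asserted conclusion on a set of probability at least $1-\eta$.

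The step I expect to require the most care is confirming that the coupled offspring law is genuinely i.i.d.\ across generations with the single distribution $\{m_k\}$, so that the Kesten--Stigum dichotomy applies verbatim; the root contributes a degree from $\{w_k\}$ rather than the size-biased law $\{\tw_k\}$, but this affects only the first generation and merely multiplies the limit $\cW$ by an almost surely finite random factor that is positive on survival, leaving the $\liminf$ conclusion intact. Apart from the coupling, the only analytic input is the standard $L\log L$ theorem, which I would cite as a black box rather than reprove.
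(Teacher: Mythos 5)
Your proposal is correct and takes essentially the same route as the paper: the paper's proof is a one-line citation of the GW coupling from Theorem~\ref{t1} together with Athreya and Ney (p.~7 for the extinction probability $\eta$ of a supercritical process, p.~30 for the Kesten--Stigum $L\log L$ theorem), which are exactly the ingredients you make explicit (pathwise domination $W_n \geq Z_n$, supercriticality of the offspring law $\{m_k\}$, and positivity of the martingale limit on the survival event). Your closing remark about the root's offspring law differing from $\{m_k\}$ addresses a point the paper glosses over entirely (it simply declares the first and subsequent generations to have offspring distribution $M$), so if anything your write-up is the more careful one.
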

\begin{proof} The results follow easily
from the above coupling and the behavior of a super-critical
branching process (see Athreya and Ney, 1972, pp. 7, 30).
\end{proof}

The implication of this result is that when the cascade survives,
it grows at an exponential rate. Thus, the cascade spreads rapidly
if it does not die out quickly.

We now specialize the results for large finite graphs and examine the consequence of
the branching process coupling.
%relax the assumption of the loads being less than the capacities.
Let $G_n$ be a graph with $n$ vertices and degree distribution
$\{w_k, k \geq 0 \}$.
%As above, let the loads at each node be
%independent and have the same distribution as the random variables
%$L$ respectively.
%Let us denote the failure probability of a node by $f = P[L > C]$,
%which we now assume to be positive.
Suppose that the load from a failed parent is distributed equally
among all its non-failed neighbors, i.e. $p(k) \equiv 1/k$.
%If the load at the root node at the time of failure is $\ell$, then
The number of first generation children of the coupled branching
process has distribution
\[ \ze_k =  \sum_{j=k}^{\infty} \binom{j}{k} q_j^k (1-q_j)^{j-k}
w_j , \qquad k=0,1, \ldots,\]
where $q_j$ is as defined in (\ref{e1a}).
%
%\[ \del_j = \bbP\left[ \f{c}{j} + \ell > c \right],
%\]
%
%where $C_1,C_2$ are independent and distributed as $C$.
Let $\Psi = \sum_{k=1}^{\infty} s^k \ze_k$ be the probability
generating function of $\{\ze_k, k \geq 0 \}$, and let $\eta$ be
as defined above. The following result is now a consequence of
Theorem 3.1.3 (\cite{durrett2007random}) and the branching process
coupling described in the proof of Theorem~\ref{t1}.
\begin{thm} A sufficient condition for the existence of a giant
component in $G_n$ is that (\ref{eq:cor2}) holds. If this condition
holds, then the fraction of vertices in the giant component
exceeds $1 - \Psi(\eta)$ asymptotically. \label{t1a}
\end{thm}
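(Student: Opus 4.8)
The plan is to express the spread of failures on $G_n$ as a configuration-model exploration and then read off both conclusions from the giant-component dichotomy for such models, Theorem~3.1.3 of \cite{durrett2007random}. Fix the uniform allocation $p(k)\equiv 1/k$. Because $G_n$ is locally tree-like, an exploration begun at a uniformly chosen vertex $r$ sees $r$ with the plain degree law $\{w_k\}$, while each vertex discovered later, reached along an edge already used, contributes onward neighbours according to the size-biased law $\{\tw_k\}$. This is precisely the branching skeleton on which the cited theorem operates, the only new ingredient being the load-dependent thinning inherited from the coupling in the proof of Theorem~\ref{t1}.

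Next I would overlay that thinning. When a failed vertex with $j$ onward neighbours redistributes the capacity $c$ equally, each neighbour is activated independently with probability $q_j=\bar{\sfF}(c(j-1)/j)$. Hence the number of activated children of the root is a binomial mixture with law $\{\ze_k\}$ and generating function $\Psi$, whereas every non-root vertex activates children with law $\{m_k\}$ and generating function $\Phi$. The resulting Galton--Watson process, whose root reproduces by $\Psi$ and all later individuals by $\Phi$, has mean offspring $\Phi'(1)=\bbE[M]=\sum_k \tw_k\, k\,\bar{\sfF}(c(k-1)/k)$; thus its recurrent part is supercritical exactly when (\ref{eq:cor2}) holds. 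In that regime the smallest root $\eta$ of $\Phi(s)=s$ is strictly below $1$, the extinction probability of a subtree rooted at a non-root vertex is $\eta$, and the extinction probability started from the uniform root is $\Psi(\eta)$, so the survival probability is $1-\Psi(\eta)$.

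With this identification, Theorem~3.1.3 of \cite{durrett2007random} yields both assertions simultaneously: supercriticality forces a cluster of failed vertices whose size is linear in $n$, and the limiting fraction of vertices lying in it equals the survival probability $1-\Psi(\eta)$ of the exploration from a uniform root. The inequality rather than equality in the statement comes from the coupling being one-sided: the true post-failure load strictly exceeds $c$, so each genuine activation probability dominates $q_j$, the real cascade stochastically dominates the coupled branching process, and its giant component is therefore at least as large, giving the lower bound $1-\Psi(\eta)$.

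The step I expect to be the main obstacle is the passage from the idealized tree to the finite graph $G_n$, rather than the generating-function bookkeeping. One must control the error between the true neighbourhoods of $G_n$ and the branching tree over the $O(\log n)$ generations that govern emergence of the giant component, so that the short-cycle corrections vanish as $n\to\infty$, and one must check that the degree-dependent thinning $q_j$ -- which is not i.i.d.\ bond percolation but a per-vertex binomial whose success parameter depends on the local degree -- still lies within the reach of the exploration argument behind Theorem~3.1.3. This is exactly why $\Psi$ and $\Phi$ are re-derived from the degree laws instead of quoting a percolated-configuration-model statement. Once these two points are secured, the sandwich between the coupled branching process and the actual cascade delivers the claimed asymptotic fraction.
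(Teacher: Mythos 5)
Your proposal follows exactly the paper's route: the paper proves this theorem in one line, citing Theorem~3.1.3 of \cite{durrett2007random} together with the branching-process coupling from Theorem~\ref{t1}, with the root reproducing via $\Psi$ (plain degrees $\{w_k\}$, thinned by $q_j$) and later generations via $\Phi$ (size-biased degrees, thinned by $q_j$), and the inequality coming from the one-sided nature of the coupling since the true load at a failed node exceeds $c$. Your elaboration of the supercriticality check $\Phi'(1)=\bbE[M]>1$, the survival probability $1-\Psi(\eta)$, and the stochastic domination is a correct and somewhat more explicit rendering of the same argument, so there is nothing to add.
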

\begin{rem} If $\bbP(L > c) > 0$, then for a graph with a large
number of nodes, approximately a fraction $f = P[L>c]$ of nodes
will be in a failed state to begin with. In the sub-critical
regime, the cascades starting from these nodes will form small
islands of failed nodes, where each island is of finite size. 
In the super-critical regime, however, multiple initial failed will lead
to the formation of a unique giant component of failed nodes.
\end{rem}

\section{Sub-Critical Regime}

We now give a sufficient condition for the cascade to last only a
finite number of generations in the infinite random graph $G$
almost surely. Before we proceed, we need to describe a result of
Biggins \cite{Biggins1977} on branching random walks (BRW).

A BRW is a process that starts with a single individual labelled
$r$ located at $x \in \bbR$. An individual labelled $u$ born at
location $y \in \bbR$ lives for unit time at the end of which
gives birth to offspring located according to the point process $y
+ Z_u$, where $Z_u$ is an independent copy of a point process $Z$.
Let
\begin{eqnarray}
\delta(\th) & = & \bbE\left[ \int_{-\infty}^{\infty} e^{-\th t} dZ(t)\right],
\nonumber \\
 & = & \bbE\left[ \sum_u \exp(- \th z_u)\right], \label{BRW1}
\end{eqnarray}
where $\{ z_u \}$ are the points of $Z$. Define the function
\begin{equation}
\gamma(0) = \inf \{ \delta(\th): \th \geq 0 \}. \label{BRW2}
\end{equation}
The following result is a particular case of Theorem 2 of
\cite{Biggins1977}.

\begin{thm} \label{bigt2} Let $Z^{(n)}(0)$ be the number of individuals of the BRW
located in the interval $(-\infty, 0]$. If $\gamma(0) < 1$, then,
almost surely, $Z^{(n)}(0) = 0$ for all but finitely many $n$.
\end{thm}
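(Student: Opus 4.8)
The plan is to prove this by the first-moment (expectation) method: bound the expected number of generation-$n$ particles lying in $(-\infty,0]$ by a geometrically decaying sequence, and then invoke the Borel--Cantelli lemma. Throughout I write $|u|=n$ to mean that the particle $u$ belongs to the $n$-th generation of the BRW, and I let $S_u \in \bbR$ denote its location, with the root at $S_r = x$. In this notation $Z^{(n)}(0) = \sum_{|u|=n} \mathbf{1}\{S_u \leq 0\}$.

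The engine of the argument is the exponential first-moment identity
\[
\bbE\Big[\sum_{|u|=n} e^{-\th(S_u - x)}\Big] = \del(\th)^n, \qquad \th \geq 0,
\]
valid whenever $\del(\th) < \infty$. I would establish this by induction on $n$. The base case $n=1$ is exactly the definition (\ref{BRW1}) of $\del(\th)$. For the inductive step I use the branching property: conditioning on the first generation, the subtree rooted at each child $v$ is an independent copy of the whole process started at $S_v$, so the displacements $S_u - S_v$ of its generation-$(n-1)$ descendants contribute an independent factor $\del(\th)^{n-1}$; summing over the children and using the $n=1$ case yields $\del(\th)^n$. (This is the standard many-to-one / additive-martingale computation for branching random walks.)

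Next I convert this into a tail bound. Since $\mathbf{1}\{t \leq 0\} \leq e^{-\th t}$ for every $t \in \bbR$ and every $\th \geq 0$, we have
\[
Z^{(n)}(0) = \sum_{|u|=n}\mathbf{1}\{S_u \leq 0\} \leq \sum_{|u|=n} e^{-\th S_u},
\]
and taking expectations together with the identity above gives $\bbE[Z^{(n)}(0)] \leq e^{-\th x}\,\del(\th)^n$. By the hypothesis $\ga(0) = \inf_{\th \geq 0} \del(\th) < 1$ and the definition of the infimum, I can fix some $\th_0 \geq 0$ with $\rho := \del(\th_0) < 1$; note $\del(\th_0) < \infty$ automatically. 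Hence $\bbE[Z^{(n)}(0)] \leq e^{-\th_0 x}\rho^n$. Because $Z^{(n)}(0)$ is a non-negative integer, Markov's inequality yields
\[
\bbP[Z^{(n)}(0)\geq 1] \leq \bbE[Z^{(n)}(0)] \leq e^{-\th_0 x}\rho^n,
\]
which is summable in $n$ since $\rho < 1$. Borel--Cantelli then gives that almost surely $Z^{(n)}(0) \geq 1$ for only finitely many $n$, i.e. $Z^{(n)}(0) = 0$ for all but finitely many $n$, as claimed.

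Once the moment identity is in place the argument is routine, so I expect the only point demanding genuine care to be the rigorous justification of the factorization $\bbE[\sum_{|u|=n} e^{-\th(S_u - x)}] = \del(\th)^n$: one must justify interchanging the expectation with the random (a priori possibly infinite) sum over particles and confirm finiteness at the chosen $\th_0$. Both are controlled because $\del(\th_0) = \rho < 1$ keeps every generation's contribution bounded, so Tonelli/monotone convergence applies to the non-negative summands and the induction closes. This bookkeeping is the main, though mild, obstacle.
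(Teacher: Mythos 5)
Your proof is correct, and it is worth noting at the outset that the paper itself does not prove this statement at all: Theorem~\ref{bigt2} is quoted verbatim as a special case of Theorem 2 of \cite{Biggins1977}, so there is no in-paper argument to compare against. What you have written is a correct, self-contained proof of exactly the special case the paper needs (barrier at $0$, i.e.\ the case $a=0$ of Biggins' result for particles in $(-\infty,na]$). Your route --- the many-to-one identity $\bbE[\sum_{|u|=n}e^{-\theta(S_u-x)}]=\delta(\theta)^n$ established by induction and Tonelli, the Chernoff bound $\mathbf{1}\{t\leq 0\}\leq e^{-\theta t}$, a choice of $\theta_0$ with $\delta(\theta_0)<1$ from the definition of the infimum, then Markov plus Borel--Cantelli --- is in fact the standard proof of the subcritical half of Biggins' theorem, so you have essentially reconstructed the relevant portion of the cited reference rather than found a genuinely different argument. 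The full theorem in \cite{Biggins1977} is stronger (it also gives the almost-sure exponential growth rate $\frac{1}{n}\log Z^{(n)}(na)\to\log\gamma(a)$ in the supercritical regime $\gamma(a)>1$, which requires subadditivity and large-deviation machinery), but none of that is needed for the direction used here. What your write-up buys is self-containedness: the only hypotheses used are $\gamma(0)<1$ and non-negativity of the summands, so the argument is valid with no moment or finiteness assumptions on the offspring point process beyond those implicit in $\delta(\theta_0)<1$, and your handling of the possibly infinite sums via Tonelli is the right way to close the one genuinely delicate step.
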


Note that the result is independent of the location of the initial
individual. Let
\begin{equation}
h := \inf_{\theta \geq 0} \left\{ \bbE \left[ N e^{\th \left( L -
(1-p)c \right)}  \right] \right\}, \label{e2}
\end{equation}
where the random variable inside the expectation is taken to be
zero if $N=0$ and given $N=k > 0$, $p(k)$ is as defined in
(\ref{eq:distp}).
%We assume for the remainder of this section that $\bbE \left[ L - (1-p)c \right] <
%0$ but $L - (1-p)c > 0$ with positive probability.
%
\begin{thm} Consider the ARC model on the graph $G$ as described
above. If $h < 1$, then the cascade starting at a node $r$ with
any load $\ell > c$ will terminate in finite time with probability
$1$.
%
%\begin{equation}
%\log \tm - h < 0. \label{e3}
%\end{equation}
%
\label{t2}
\end{thm}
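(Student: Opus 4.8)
The plan is to construct a branching random walk (BRW) that dominates the ARC failure process from above, so that each failed node corresponds to a BRW particle lying in $(-\infty,0]$, and then to invoke Theorem~\ref{bigt2}. For a node $v$ that fails, write its overshoot $o_v = L_v - c \geq 0$, where $L_v$ is its load at failure. If $v$ is a child of a failed node $u$ and receives fraction $p_{uv}$ of $u$'s load, then $L_v = p_{uv} L_u + L_v^{(0)}$, so a direct computation gives the autoregressive recursion
\[ o_v = p_{uv}\, o_u - z_v, \qquad z_v := (1-p_{uv})c - L_v^{(0)}. \]
The displacement $z_v$ is exactly the quantity whose exponential moment appears in the definition of $h$: summing $e^{-\th z_v}$ over the $N$ offspring of a generic non-root node and using exchangeability of the allocation together with independence of the loads yields $\delta(\th) = \bbE[N e^{\th(L-(1-p)c)}]$, whence $\ga(0) = \inf_{\th \geq 0}\delta(\th) = h$.

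The crux is to linearize this recursion. Since $u$ has failed we have $o_u \geq 0$, and since $0 \leq p_{uv} \leq 1$ we get $p_{uv} o_u \leq o_u$, hence $o_v \leq o_u - z_v$. Defining an additive surrogate $\tilde o_v := \tilde o_u - z_v$ with $\tilde o_r = \ell - c$ at the root, an easy induction along each failure path gives $o_v \leq \tilde o_v$ for every failed $v$. This is the step that converts the genuinely autoregressive load dynamics into an additive walk that a BRW can track: the multiplicative factor $p_{uv} \in [0,1]$ acting on a nonnegative overshoot is simply dropped.

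I would then set $x_v := -\tilde o_v$, so that $x_v = x_u + z_v$ with $x_r = c-\ell < 0$; this is precisely the BRW with offspring displacement law $z_v$. Because the graph is locally tree-like and the allocations and initial loads are independent across nodes, the offspring point processes are i.i.d.\ copies of $Z$, so the process is a bona fide BRW. If $v$ fails then $o_v \geq 0$, and by the domination $\tilde o_v \geq o_v \geq 0$, i.e.\ $x_v \leq 0$. Thus the set of nodes failing at generation $n$ is contained in the set of generation-$n$ BRW particles lying in $(-\infty,0]$, so $W_n \leq Z^{(n)}(0)$. Since $h<1$ gives $\ga(0)<1$, Theorem~\ref{bigt2} shows $Z^{(n)}(0)=0$ for all but finitely many $n$ almost surely; hence $W_n=0$ eventually, and because a node can fail only if its parent has failed, the cascade has in fact terminated after finitely many generations.

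The main obstacle is the linearization inequality $o_v \leq o_u - z_v$: it is what makes the additive BRW comparison legitimate, and it crucially uses both $o_u \geq 0$ (parent failed) and $p_{uv}\leq 1$. A secondary technical point is that the root reproduces according to its full (non-size-biased) degree rather than via $Z$; I would handle this by noting that the first generation of failures is almost surely finite, and that from each such node the process evolves as an independent BRW with reproduction law $Z$, so Biggins's conclusion applies to each of the finitely many subtrees and hence to their union.
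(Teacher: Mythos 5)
Your proof is correct and follows essentially the same route as the paper: the same linearization inequality (replacing $p_{uv}L_u$ by $L_u - (1-p_{uv})c$, valid precisely because the failed parent carries load at least $c$), domination of the failed set by an additive BRW, and an application of Theorem~\ref{bigt2} after identifying $\gamma(0)=h$. If anything, your version is slightly cleaner: by working with negated overshoots you build the reflected, barrier-free walk directly, avoiding the paper's intermediate killed process $X$ and its mirror $Y$, and you explicitly handle the root's non-size-biased offspring law, a technical point the paper glosses over.
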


\begin{proof}
We will dominate the ARC process with a BRW process starting with
a single individual located at $\ell$ to mimic the failing of the
root with load $L_r^{(0)} = \ell > c$.
%
%with barrier (BRW) on $\mR $, where the branching is constrained
%to occur above barrier $c$. The BRW process starts with a single
%particle located at $\ell > c$ to mimic the failing of root with
%load $\ell > c$. This particle lives for a unit time at the end of
%which gives birth to a random number of offspring with mean $\mu$.
%The location of the offspring form a point process that is an
%independent copy of a point process $\cal{N}$ shifted by the
%location of the parent. Each offspring lives for a unit time at
%the end of which those that are to the right of barrier $c$ give
%birth to independent number of offspring, but now with the mean
%family size $\tm$. This process continues indefinitely unless all
%particles end up to the left of $c$.
%
To motivate the construction of the coupled BRW process, suppose
that $(u,v)$ is an edge in the random graph $G$ and the ARC
cascade upon reaching node $u$ at time $t$ results in the load at
$u$ increasing to $L_u^{t} \geq c$. This will cause the node $u$
to fail and result in the load at node $v$ at time $t+1$
increasing to
\[ L_v^{(t+1)} = p_{uv} L_u^{(t)} + L_v^{(0)}. \]
Hence the difference in the resultant loads at nodes $v$ and $u$,
or the ``drift'' in the load satisfies
\[ L_v^{(t+1)} - L_u^{(t)} = - (1-p_{uv}) L_u^{(t)} + L_v^{(0)} \leq - (1 -
p_{uv}) c + L_v^{(0)}, \]
where the last inequality follows since node $u$ fails at time $t$
and thus $ L_u^{(1)} \geq c$. Hence
\begin{equation}
L_v^{(t+1(} = L_u^{(t)} + (L_v^{(t+1)} - L_u^{(t)}) \leq L_u^{(t)} -
(1 - p_{uv}) c + L_v^{(0)}.
 \label{pe2}
\end{equation}
The coupled BRW process is defined as follows. The process starts
with a particle labelled $r$ located at $L_r^{(0)} = \ell > c$, which is the
load at the time of failure of the root node. Suppose node $u$
fails at time $t$ in the ARC process. If the number of non-failed
neighbors $N(u)$, of $u$ equals zero, then no child is born in the
BRW process. Else given $N(u)=k$, then for each non-failed
neighbor $v_i$ of $u$, an individual with the same label $v_i$ is born
in the BRW process at time $t+1$. If the new load at node $v_i$ in
the ARC process is
\[L_{v_i}^{(t+1)} = p_{uv_i}(k) L_u^{(t)} + L_{v_i}^{(0)},\]
then the location of $v_i$ in the BRW process will
\[x_{v_i}^{(t+1)} = x_u^{(t)} - (1-p_{uv_i}(k))c + L_v^{(0)},\]
where $x_u^{(t)}$ is the location of the individual labelled $u$
in the BRW process. A barrier is kept at $c$ in the BRW process,
that is if a child is born at location $x < c$, then it is killed.
%
%For each node that fails in the ARC process, a particle with the
%same label (index) is born in the BRW process. The displacement of
%this node from the location of its parent has the same
%distribution as $L - (1-p) c$, where $p$ is as defined in
%(\ref{eq:distp}), and $L$ has distribution $\sfF$. The barrier is
%kept at $c$ for the BRW process, that is if a child is born at $x
%< c$, then it is killed.

Refer to Fig. \ref{fig:BRWdominance}, for an illustration of the
coupling argument. In Fig. \ref{fig:BRWdominance}, suppose at time 
$t$, a node labelled $u$ fails in the ARC process, then a particle
labelled $u$ is born in the BRW process located at $x^{(t)}_u \geq L_u^{(t)}$.
Now, since node $u$ has failed, a fraction $p_{uv}$ of the load at
node $u$ is transferred to node $v$, and if node $v$ fails as a
consequence, then a child of the particle at $x^{(t)}_u$ is born in the BRW process
at location $x^{(t+1)}_v$, where the displacement $x^{(t+1)}_v -
x^{(t)}_u = L_v^{(0)} - (1 - p_{uv}) c$. Thus, from \eqref{pe2},
if node $v$ fails at time $t+1$ in the ARC process, that is, $L_v^{(t+1)} \geq c$,
then a corresponding particle with label $v$ is born in the BRW
process and has location $x_v^{(t+1)} \geq L_v^{(1)}$. Thus, the
ARC process terminates if at any time there are no individuals in
the BRW.
%
%\centering
\begin{center}
\begin{figure*}
\includegraphics[height=2in]{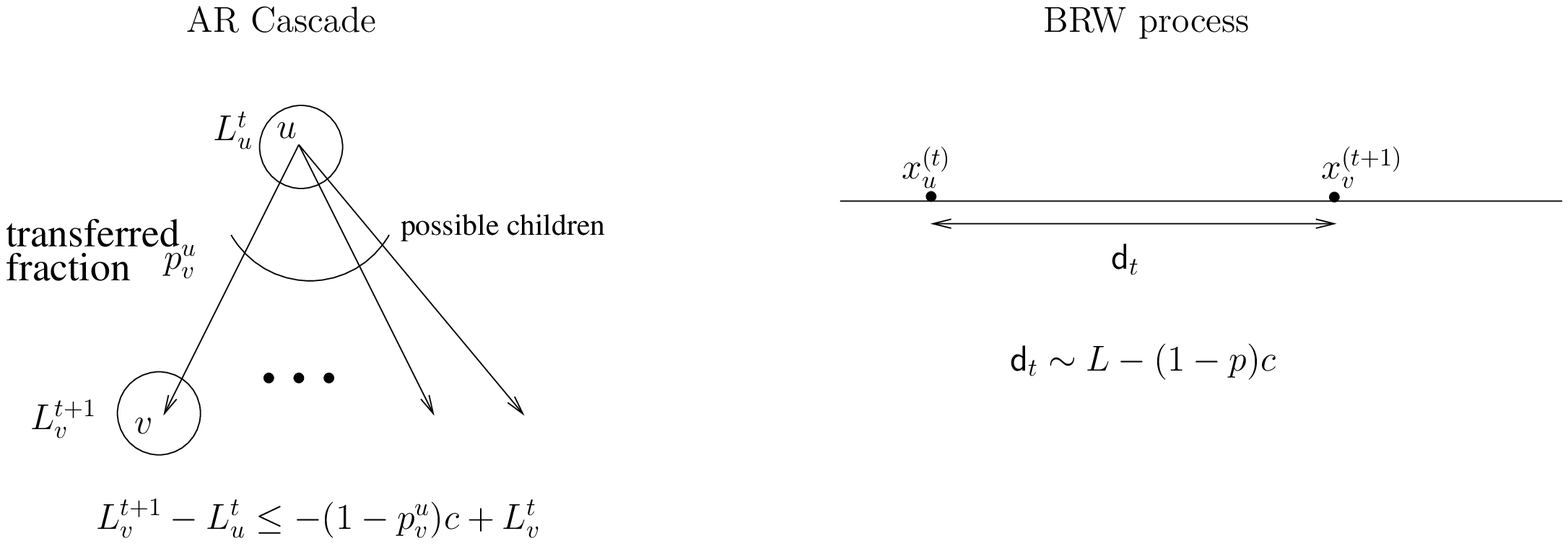}
\caption{Coupling the ARC model with BRW with a barrier.}
\label{fig:BRWdominance}
\end{figure*}
\end{center}
Let the coupled BRW process defined above be denoted by $X$. To
apply Theorem~\ref{bigt2}, we consider another BRW $Y$ coupled to
$X$ as follows. The $X, Y$ processes start off with one
individuals located at $\ell, -\ell$ respectively. 
Starting with the initial ancestors, if a child is born to a parent located at $x$ in the BRW $X$ such that the location of the child is $x+d$, then a child is born to a corresponding parent located at 
$y = -x$
in the process $Y$ and the child is located at $y-d$.
%Starting with
%the initial ancestors, if a child is born to a node located at
%$x$ in the BRW $X$ such that the location of the child is $x+d$,
%then a child is born to a corresponding node located at $y$ in
%the process $Y$ and the child is located at $y-d$. 
However, in the
process $Y$, we do not kill individuals born at locations to the left of $c$, that is, there is no
barrier. Thus in each generation, the individuals in $Y$ produce
offspring whose numbers are distributed as $N$ with displacements
distributed as $-(L-(1-p)c)$. 
Thus, the number of
individuals in $Y$ in any generation is at least as large as in
the process $X$. Moreover, note that if the BRW $Y$ drifts to the right,
then the $X$ process drifts to the left. 

Shift the origin to $-c$, so that the particle located at $x$ in process $X$ is at $x-c$ and 
particle located at $y$ in $Y$ process is at $-(y-c)$. Let $X_n([a, b]) (Y_n([a, b]))$ be
the number of individuals of process $X(Y)$ born at time $n$ and
are located in the interval $[a,b]$. Then the BRW $X$ terminates
if for some $n$, $X_n([2c, \infty)) = 0$ and this happens if
$Y_n((-\infty,0] = 0$. The location of the children of a node located at $0$ in the process $Y$ is given by $Z = \{
-(L_i - (1-p_i)c), i = 1, 2, \ldots ,N \}$ and $N$ has
distribution $\{\tw_k \}$. Thus for the process $Y$, we have
\[ \delta(\th) = \bbE \left[ N e^{\th \left( L -
(1-p)c \right)}  \right], \]
and hence $\gamma(0)=h$. Hence from Theorem~\ref{bigt2}, we have that
for $h < 1$, with probability one, the BRW $Y$ terminates  in
finite time. Hence the BRW $X$ terminates in finite time and
consequently, so does the ARC process.

%Let $h$ be as defined in (\ref{e2}). Then by Theorem~1 (part (ii))
%of \cite{Biggins1991}, the BRW becomes extinct in finite time with
%probability one if (\ref{e3}) holds. In the above theorem it is
%assumed that each child receives a displacement from the location
%of its parent in an i.i.d. fashion. However, note that in the
%coupled BRW process defined above, the displacement of the nodes
%depend on the random allocations which are not independent.
%Fortunately, the proof of Theorem~1 (part (ii))
%\cite{Biggins1991}, is an immediate consequence of Theorem~1 of
%\cite{Biggins1977}, where the location of the children of an
%individual located at $x$ are an independent copy of an arbitrary
%point process shifted by $x$. The result thus extends to our case.
\end{proof}

\section{Simulations}
In this section, to better understand the cascade dynamics we present some experimental results using Monte Carlo simulations, and compare our derived lower and upper bounds with the experimental thresholds at which cascade occurs. 
We consider both a random tree with a given degree distribution and a deterministic tree with fixed degree for each node. We let the tree to be of $N=10$ levels, and count the empirical measure of how often the cascade reaches any leaf at level $N$ as a measure of the number of failed nodes (probability of cascade) for large $N$ as a function of the capacity $c$. 

In Figs. \ref{fig:c+r2}, \ref{fig:c+r3}, and \ref{fig:c+r4}, we plot the cascade probability for a deterministic tree with degree $2$, $3$, and $4$ together with a random tree with Poisson distributed degree with mean $2$, $3$, and $4$, respectively. The load distribution is assumed to be a truncated Exponential distribution between $0.2$ and $2$, respectively. The vertical lines on the left and right denote our derived lower and upper bounds (similar color) for each of the plots. We notice that the derived lower and upper bounds are fairly tight and give us a good estimate of the true threshold value. Also, as we increase the degree value, the derived bounds become tighter. 

For a fixed tree (fixed number of children), the super-critical condition for cascade 
\eqref{eq:cor21} can be rewritten as :
$\sfF((d-1)c/d)<(d-1)/d$,
where $d$ is the number of child nodes,
$c$ is the capacity and 
$\sfF$ is the cumulative load distribution.
If we consider the load distribution to be uniform  between $[a \ b]$,
and let $\frac{d-1}{d} = \beta$, then the super-critical condition for cascade is 
$\frac{(\beta c - a )}{(b - a)} < \beta$. 
Recall that $b < c$ (since we have assumed that $\sfF$ is supported on $(0,c)$). Then if $a=0$,  condition $\frac{(\beta c - a )}{(b - a)} < \beta$ is never satisfied, 
so we pick $a$ greater than zero for all simulations. 
The same argument holds for the exponential load distribution case, and we simulate for truncated exponential load distribution between $[a,b]$ with $a>0$.

In Figs. \ref{fig:fexp}, \ref{fig:fpl}, and \ref{fig:fu}, we plot the cascade probability for deterministic tree with degree $2$, $3$, and $4$ for three different load distributions, exponential, power-law, and uniform, respectively. Once again, the vertical lines on the left and right denote our derived lower and upper bounds for cascade threshold for each of the plots. As before, we notice that the derived lower and upper bounds are fairly tight. Moreover, the bounds are tighter for exponential and power-law load distribution in comparison to uniform load distribution.

In Fig. \ref{fig:pla}, we plot the cascade probability for random tree with power-law degree distribution where degree $d$ is distributed as $P(d=k) = Ak^{-\alpha}$ for uniform load distribution between $[0.2 \ 2]$ as a function of parameter $\alpha$. For reasonable simulation complexity, we renormalize the degree distribution to have at most $10$ children, i.e. $k\le 10$. We see that as the power-law exponent $\alpha$ increases, the derived bounds become loose. 
Finally, in Fig. \ref{fig:Pa}, we plot the cascade probability for random tree with Poisson degree distribution for uniform load distribution between $[a, \ 2]$ as a function of parameter $a >0$.  We see that as $a$ increases, i.e., as the load distribution becomes more concentrated, the derived bounds become loose.

\begin{center}
\begin{figure*}
\includegraphics[height=3in]{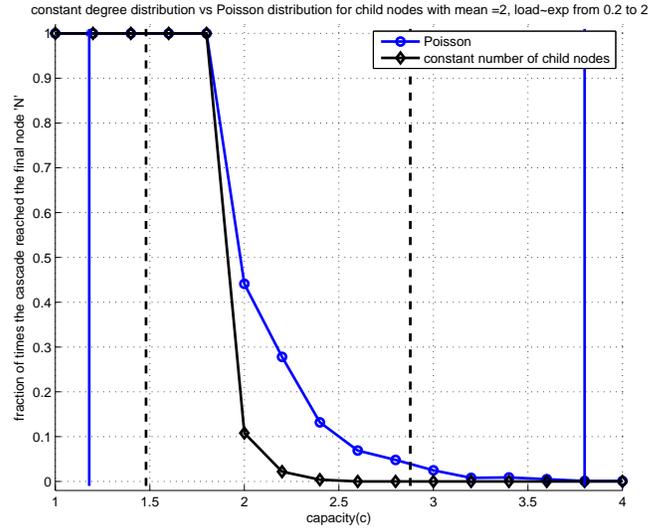}
\caption{Comparison of cascade probability with fixed and random tree with degree $2$.}
\label{fig:c+r2}
\end{figure*}
\end{center}

\begin{center}
\begin{figure*}
\includegraphics[height=3in]{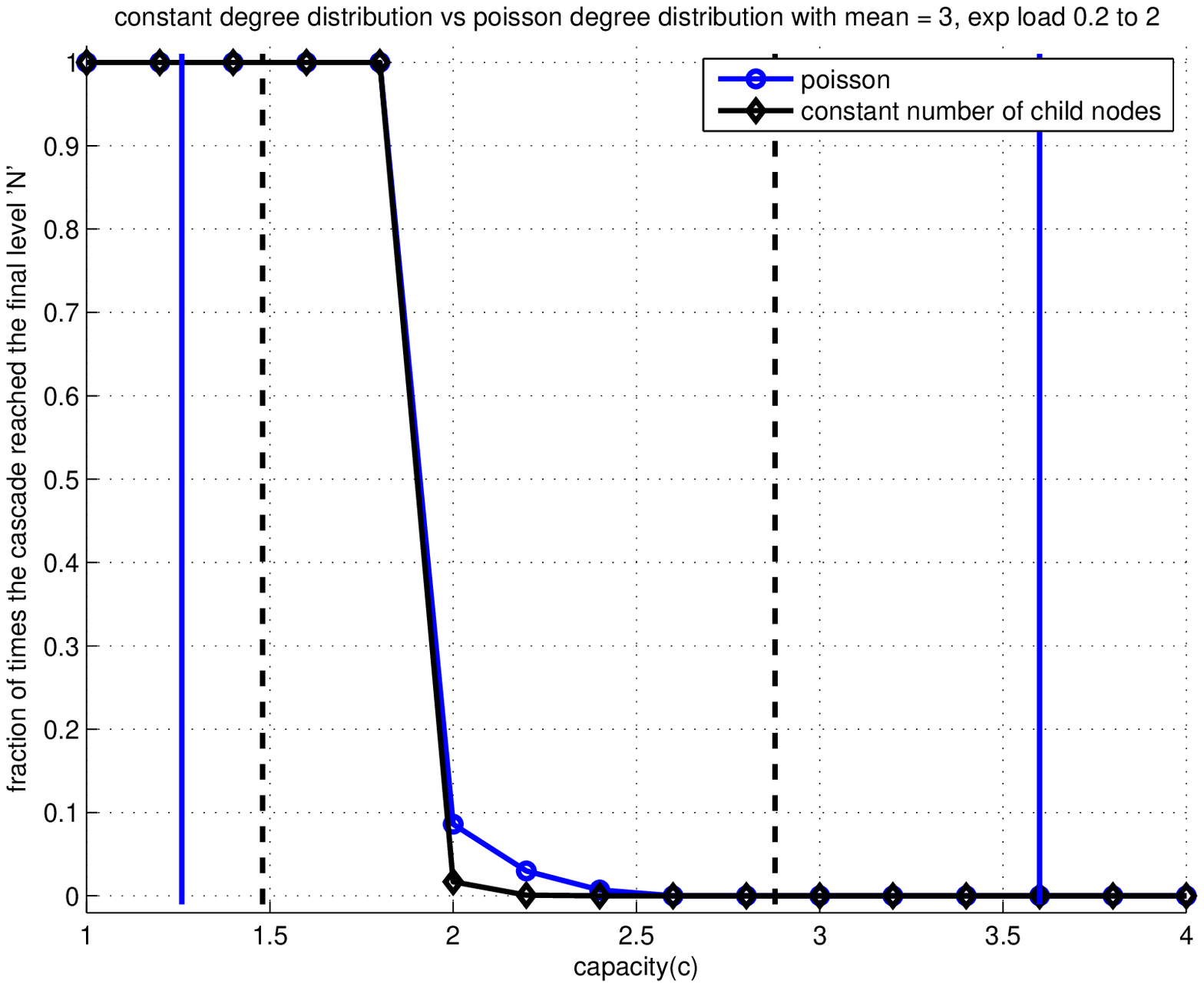}
\caption{Comparison of cascade probability with fixed and random tree with degree $3$.}
\label{fig:c+r3}
\end{figure*}
\end{center}

\begin{center}
\begin{figure*}
\includegraphics[height=3in]{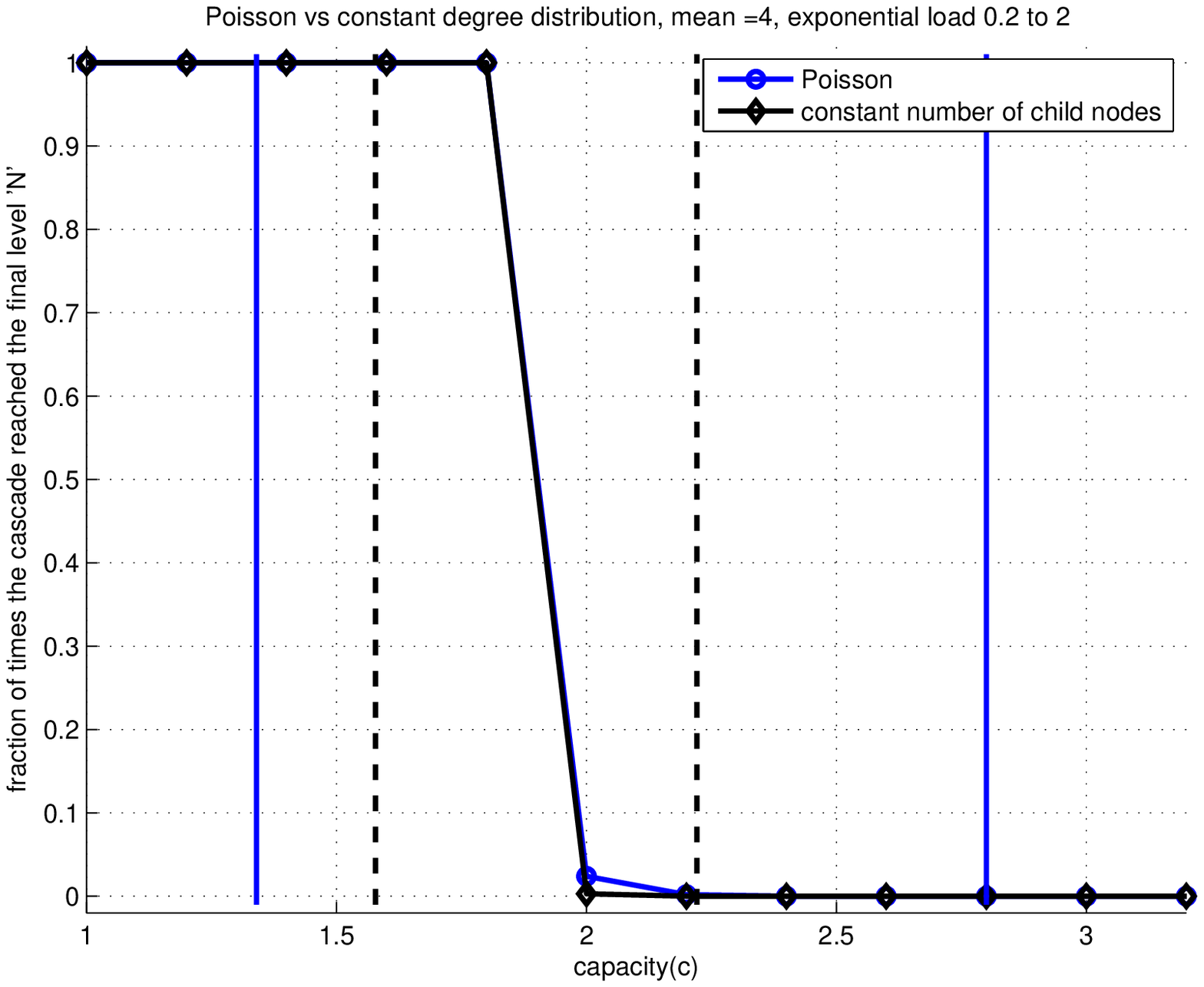}
\caption{Comparison of cascade probability with fixed and random tree with degree $4$.}
\label{fig:c+r4}
\end{figure*}
\end{center}

\begin{center}
\begin{figure*}
\includegraphics[height=3in]{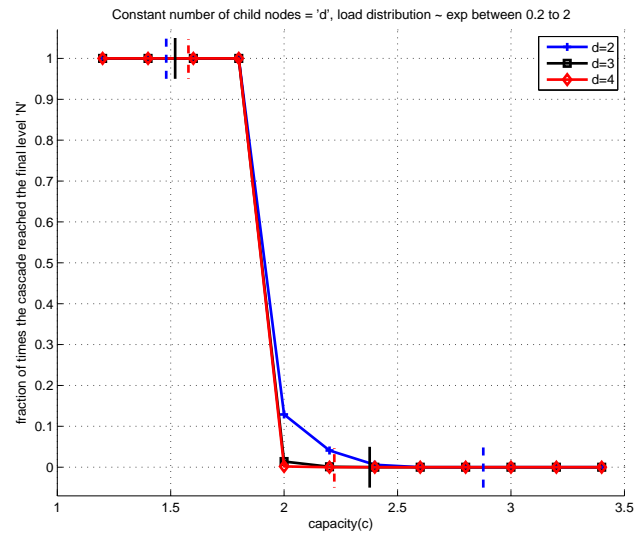}
\caption{Comparison of cascade probability with fixed tree for different degrees with exponential load distribution.}
\label{fig:fexp}
\end{figure*}
\end{center}

\begin{center}
\begin{figure*}
\includegraphics[height=3in]{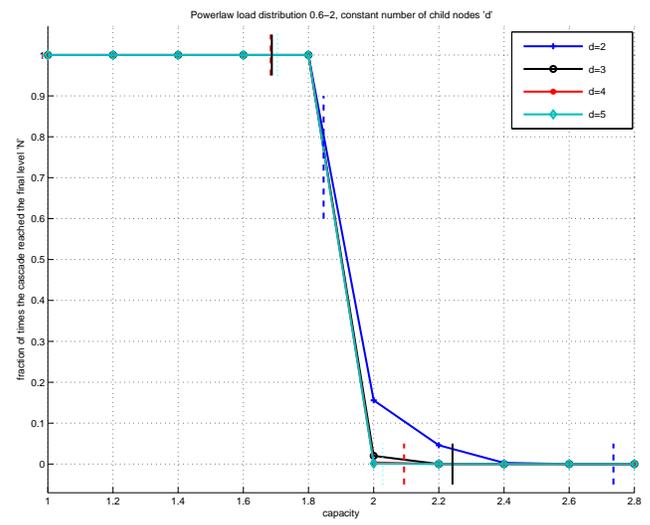}
\caption{Comparison of cascade probability with fixed tree  for different  degrees with power-law load distribution.}
\label{fig:fpl}
\end{figure*}
\end{center}

\begin{center}
\begin{figure*}
\includegraphics[height=3in]{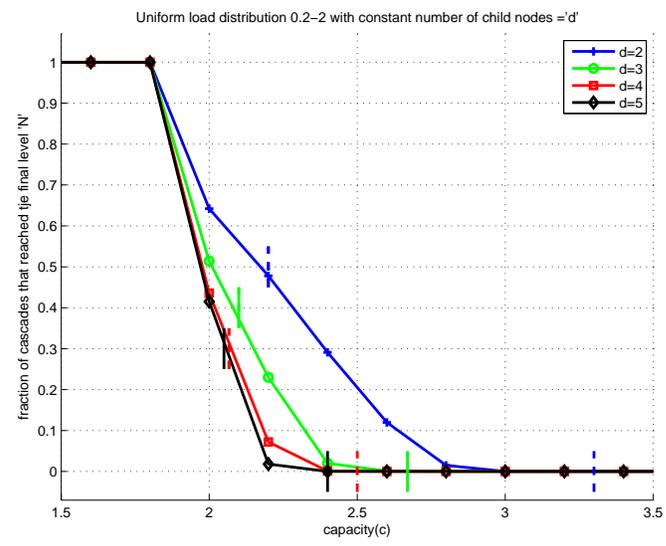}
\caption{Comparison of cascade probability with fixed  tree  for different  degrees with uniform load distribution.}
\label{fig:fu}
\end{figure*}
\end{center}

\begin{center}
\begin{figure*}
\includegraphics[height=3in]{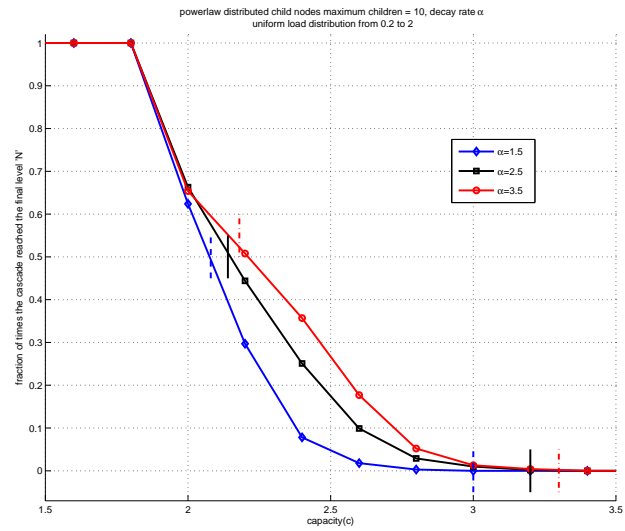}
\caption{Comparison of cascade probability with random tree with power-law degree distribution and uniform load distribution.}
\label{fig:pla}
\end{figure*}
\end{center}

\begin{center}
\begin{figure*}
\includegraphics[height=3in]{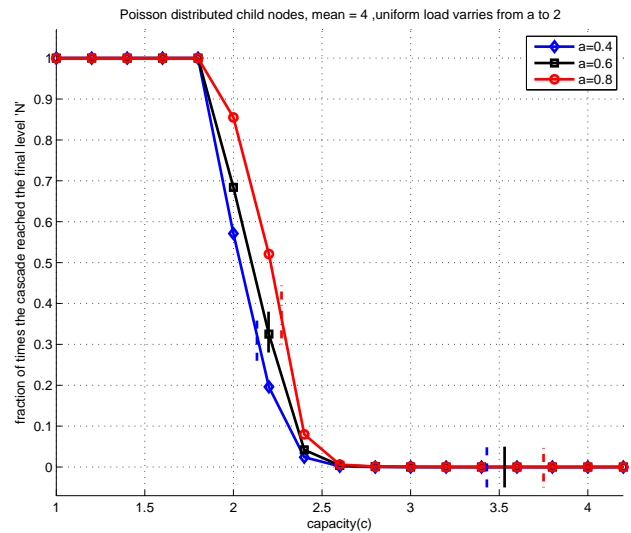}
\caption{Comparison of cascade probability with random tree with Poisson degree distribution and uniform load distribution.}
\label{fig:Pa}
\end{figure*}
\end{center}
%
%Let $\sfD$ denote the distribution of $L - (1-p) c$ and suppose that
%(\ref{e3}) holds. Let $\al > 0$ be the smallest solution of the
%equation
%
%\begin{equation}
% \tm \int_{-\infty}^{\infty} e^{\al t} d\sfD(t) = 1.
%\label{e4}
%\end{equation}
%
%Let $g(x)$ be the total number of nodes that fail in the ARC
%process, starting with a load $x > c$ at the root. The following
%proposition holds.
%
%\begin{prop} Suppose that (\ref{e3}) holds and $H$ is non-lattice
%and $\al$ be as defined in (\ref{e4}). Then for some constant $\kappa > 0$,
%
%\[ \lim_{x \to \infty} e^{-\al x} g(x) \leq \kappa. \]
%
%\label{p2}
%\end{prop}
%
%\begin{proof} This is a straightforward consequence of the BRW coupling
%used above and Theorem 3 from \cite{Biggins1991}.
%\end{proof}
%

\bibliographystyle{IEEEtran}
\bibliography{IEEEabrv,Research}

%\begin{thebibliography}{99}
%
%\bibitem[Biggins 1977]{Biggins77}
%{\sc Biggins, H. D.} (1977). Chernoff's theorem in the branching
%random walk. {\em Journal of Applied probability} {\bf 14},
%630-636.
%
%\bibitem[Biggins 1991]{Biggins91}
%{\sc Biggins, J. D. , Lubachevsky, B. D., Shwartz, A., and Weiss,
%A.} (1991). A branching random walk with a barrier. {\em The
%Annals of Applied Probability} {\bf 1(4)}, 573-581.
%
%\bibitem[Lelarge 2012]{Lelarge12}
%{\sc Lelarge, M.} (2012). Diffusion and cascading behavior in
%random networks. {\em Games and Economic Behavior} {\bf 75(2)},
%752-775.
%
%\bibitem[Watts 2002]{Watts02}
%{\sc Watts, D. J.} (2002). A simple model of global cascades on
%random networks. {\em PNAS} {\bf 99}, 5766-5771.
%
%\bibitem[Newman 2010]{Newman10}
%{\sc Newman, M.E.J.} (2010). Networks: An Introduction. Oxford
%University Press, Oxford, 2010.
%
%\bibitem[Newman 2001]{Newman01}
%{\sc Newman, M.E.J., Strogatz, S. H., and Watts, D. J.} (2002).
%Random Graphs with arbitrary degree distributions and their
%applications. {\em Phys. Rev. E} {\bf 64}, paper 026118.
%
%\bibitem[Newman 2001]{Newman01}
%{\sc Newman, M.E.J., Strogatz, S. H., and Watts, D. J.} (2002).
%Random Graphs with arbitrary degree distributions and their
%applications. {\em Phys. Rev. E} {\bf 64}, paper 026118.
%
%
%
%\end{thebibliography}

\end{document}